\newtheorem{theorem}{Theorem}
\newtheorem{assumption}{Assumption}
\newtheorem{corollary}{Corollary}
\newtheorem{definition}{Definition}
\newtheorem{example}{Example}
\newtheorem{notation}{Notation}
\newtheorem{remark}{Remark}
\newenvironment{proof}[1][Proof]{\textbf{#1.} }{\ \rule{0.5em}{0.5em}}
\begin{document}

\begin{center}
{\LARGE On modeling hard combinatorial optimization problems as linear
programs: Refutations of the ``unconditional impossibility'' claims}
{\Large \medskip\medskip}

Moustapha Diaby

OPIM Department; University of Connecticut; Storrs, CT 06268\newline
moustapha.diaby@uconn.edu\medskip{\Large \medskip}

Mark H. Karwan

Department of Industrial and Systems Engineering; SUNY at Buffalo; Amherst, NY
14260\newline mkarwan@buffalo.edu\medskip{\Large \medskip}

Lei Sun

Department of Industrial and Systems Engineering; SUNY at Buffalo; Amherst, NY
14260\newline leisun@buffalo.edu\medskip{\Large \medskip} \ 
\end{center}

\textsl{Abstract.}{\small \ There has been a series of developments in the
recent literature (by essentially a same ``circle'' of authors) with the
absolute/unconditioned (implicit or explicit) claim that there exists no
abstraction of an \textit{NP-Complete} combinatorial optimization problem in
which the defining combinatorial configurations (such as ``tours'' in the case
of the traveling salesman problem (TSP) for example) can be modeled by a
\textit{polynomial-sized }system of linear constraints. The purpose of this
paper is to provide general as well as specific refutations for these recent
claims. \bigskip}

\textsl{Keywords:}\textbf{\ }{\small Linear Programming; Combinatorial
Optimization; Computational Complexity; Traveling Salesman Problem; TSP;
``}${\small P}${\small \ vs. }${\small NP}${\small .''\bigskip}

\section{Introduction}
\label{Introduction_Section}

\noindent Combinatorial optimization problems (COPs) occupy a central place in
Operations Research (OR) and in Mathematics in general. Part of the reason for
this is the wide practical applicability of COPs, as virtually every
scheduling, sequencing, or routing problem that arises in industry involves a
COP. COPs are also central for the field of Theoretical Computer Science (TCS)
in particlar, because COPs serve as a foundation for the study of
\textit{computatibity,} which in turn, has to do with the very foundations of
Mathematics in general, and specifically the issues of the generalizability of
the \textit{axiomatic approach} (Hilbert (1898)) and the so-called
\textit{incompleteness theorem }(G\"{o}del (1931)). Since the beginnings of
their field, Operations Researchers have approached COPs from an
``Engineering'' perspective, with the focus on developing/``engineering''
methods aimed at obtaining ideal or satisfactory solutions to the practical
problems they model. The perspectives in TCS and Mathematics in general, on
the other hand, have been much more abstract, with a focus on what is/is not
possible in terms of \textit{computability} (G\"{o}del (1931)).

A convergence between the OR and TCS perspectives was brought about by the
landmark invention of the theory of \textit{NP-Completeness} (Cook (1971); see
also Garey and Johnson (1979)) on one hand, and the landmark developments of
\textit{interior-point methods} (Katchyian (1979); Karmarkar (1984)) on the
other hand. \textit{NP-Completeness} delineated the so-called ``$NP$'' class
of problems, and within it, the subclass ``$P$'' of problems that are easily
``computable'' according to some theoretical measure, thereby essentially
reducing the question of \textit{computability} to the question of whether the
``$P$'' and ``$NP$'' classes are equal. With \textit{interior-point methods}
came the discovery that the ``Engineering'' problems called ``Linear Programs
(LPs)'' (see Bazaraa \textit{et al}. (2006)) fall within the ``$P$'' class of
problems, and this reduced the \textit{computability} question for engineers
to that of whether or not any of the representatives of the ``hard''
$NP$-class of problems (i.e., the so-called \textit{NP-Complete} problems)
could be modeled/``engineered'' as a \textit{polynomial-sized} LP. A COP is
said to have been ``modeled as an LP'' if it has been abstracted into an LP
which has integral extreme points corresponding to the combinatorial
configurations being decided upon. An LP model is said to be of
\textit{polynomial size} if its numbers of variables and constraints are
respectively bounded by polynomial functions of a measure of the \textit{size}
of the input data for the problem being modeled.

As far as we know, the first \textit{polynomial-sized} LP models to be
proposed for an \textit{NP-Complete} problem (specifically, the traveling
salesman problem (TSP); see Lawler \textit{et al}. (1985)) are those of Swart
(1986; 1987). Not having/having had access to Swart's papers, we do not know
of any of the details of his models. However, the concensus of the research
communities is that their validities were refuted by Yannakakis (1991). The
essence of the Yannakakis (1991) developments is the showing that for the TSP,
the polytope which is stated in terms of its extreme points in the space of
the ``natural'' city-to-city variables (i.e., the so-called ``\textit{The} TSP
Polytope''; see Lawler \textit{et al}. (1985)) does not have a
\textit{polynomial-sized extended formulation} ($EF$) which is
\textit{symmetric }(see Yannakakis (1991))\textit{.} Roughly, a system of
linear constraints (say, ``System $A$'') is an $EF$ (or an
``\textit{extension}'') of another system of linear constraints (say, ``System
$B$'') if the polytopes they respectively induce in the space of the ``System
$B$'' variables coincide (see Yannakakis (1991) or Balas (2005), among
others). If ``System $A$'' is an $EF$ of ``System $B$,'' the polytope induced
by ``System $A$'' is said to be an $EF$ of the polytope induced by ``System
$B$,'' and also, ``System $B$'' (or the polytope it induces) is said to be the
\textit{projection of ``System }$A$\textit{'' }(or the polytope it induces)
\textit{on the space of ``System }$B$.\textit{''} Clearly, a given system of
equations cannot induce a polytope in the space of variables which are not
part of that system. Hence, as developed in Diaby and Karwan (2016; 2017), the
addition of redundant variables (and possibly, constraints also) to a system
of constraints leads to \textit{degenerate}/non-meaningful $EF$ relationships
with respect to the task of making inferences about model \textit{sizes}.

Yannakakis' (1991) work is carefully scoped. However, it fails to make
appropriate explicit exceptions for cases involving zero-matrices in its
analyses, thereby failing to explicitly distinguish between
\textit{degenerate}/non-meaningful $EF$s from meaningful ones with respect to
the puposes of those analyses. We argue that the lack of recognition of this
distinction is what has led to the increasingly over-scoped claims in the
recent $EF$ literature which attempts to build on Yannakakis' (1991) approach.
Specifically, the claim in Fiorini \textit{et al}. (2011; 2012) is that the
natural descriptions of the traveling salesman and stable set polytopes do not
have \textit{polynomial-sized EF}s, regarless of the \textit{symmetry}
condition of Yannakakis (1991). Theoretical and numerical refutations of these
Fiorini \textit{et al}. (2011; 2012) developments are provided in Diaby and
Karwan (2016; 2017). In Avis and Tiwary (2015), Bri\"{e}t \textit{et al}.
(2015), Braun \textit{et al}. (2015), Fiorini \textit{et al}. (2015) (which is
essentially the ``journal version'' of Fiorini \textit{et al.} (2011; 2012)),
Lee \textit{et al}. (2017), and Averkov \textit{et al}. (2018) respectively,
the claim is the absolute/unconditioned (implicit or explicit) statement that
there exists no abstraction of an \textit{NP-Complete} COP in which the
defining combinatorial configurations (such as ``tours'' in the case of the
TSP for example) can be modeled by a \textit{polynomial-sized }system of
linear constraints. The purpose of this paper is to provide general as well as
specific refutations for these recent claims.

The plan of the paper is as follows. We will discuss general-level refutations
in section \ref{Gen'l_Refut_Section}, where we will provide a
\textit{polynomial-sized} system of linear constraints which correctly
abstracts TSP tours, and also derive conditions for the existence of an affine
map (which a linear map is a special case of) between disjoint sets of
variables with no implication nor need for \textit{extended formulations}
($EF$) relationships. Specific refutations will be discussed in section
\ref{Specific_Refutations_Section}, using the Fiorini \textit{et al}. (2015)
developments. Finally, some concluding remarks will be offered in section
\ref{Conclusion_Section}.

\section{General refutations}
\label{Gen'l_Refut_Section}

The fundamental presumption upon which the developments in the ``unconditional
impossibility'' papers (Avis and Tiwary (2015), Bri\"{e}t \textit{et al}.
(2015), Braun \textit{et al}. (2015), Fiorini \textit{et al}. (2015), Lee
\textit{et al}. (2017), and Averkov \textit{et al}. (2018)) rest is that the
\textit{size} of the description of a polytope in the space of its variables
can be inferred from that of another polytope stated in a disjoint space of
variables. The objective of this section is to provide general, direct
refutations of this misconception, and using the TSP, of the over-reaching
``impossibility claim'' itself. We will first recall the definition of the
standard TSP polytope (i.e., the so-called ``\textit{The} TSP Polytpe'').
Then, we will present an alternate, \textit{polynomial-sized} system of linear
constraints which correctly abstracts TSP tours. Finally, we will show that
the existence of an affine map between disjoint sets of variables (as was
brought to our attention by Kaibel \textit{et al}. (2013)) is not a sufficient
condition for the existence of $EF$ relationships from which valid comparisons
between the descriptions of polytopes stated in those (disjoint) spaces can be
made. For the discussions about the TSP\ polytopes, we will use the following conventions.\medskip

\begin{assumption}
\ \ \bigskip

\noindent We assume without loss of generality that:\smallskip

\begin{enumerate}
\item A city designated as ``$0$'' is the beginning and ending point of all travels;

\item The TSP tours have been ordered, with the $k^{th}$ one designated by
$T_{k}$ ($k\in\{1,\ldots,(n-1)!\}$).\medskip
\end{enumerate}
\end{assumption}

\begin{notation}
\label{Gen'l_Refutations_Notation} \ \ \ \ 

\begin{enumerate}
\item $n:$ Number of cities;

\item $\Omega:=\{0,\ldots,n-1\}$ \ \ (Index set of the cities);

\item $\mathcal{A}:=\left\{  \left(  i,j\right)  \in\Omega^{2}:i\neq
j\right\}  $ \ \ (Set of arcs of the TSP (city-to-city) graph; Set of possible
TSP ``travel legs'');

\item $m:=n-1;$

\item $M:=\Omega\backslash\{0\}$ \ \ (Set of cities to visit when city ``$0$''
is considered the starting and ending point of the travels);

\item $S:=\{1,\ldots,m\}$ \ \ (Index set for the ``times-of-visit'' for the
cities in $M$; see Picard and Queyranne (1978));

\item $\forall(i,j)\in\mathcal{A},$ $x_{ij}$ $:$ Variable indicating whether
city $i$ is visited immediately before city $j$ ($x_{ij}=1$)$,$ or not
($x_{ij}=0$);

\item $\forall(i,s)\in(M,S),$ $w_{is}:$ Variable indicating whether city $i$
is visited at ``time'' $s$ ($w_{is}=1$)$,$ or not ($w_{is}=0$);

\item $Conv(A):$ Convex hull of $A$.\medskip
\end{enumerate}
\end{notation}

\begin{definition}
[Standard TSP polytope: ``\textit{The} TSP Polytope'']
\label{"The_TSP_Polytope"_Dfn} \ \ \ 

\begin{description}
\item $\forall$ $F\subseteq$ $\mathcal{A}:F\neq\varnothing,$ let
$x^{F}:=\left\{  x\in\{0,1\}^{n(n-1)}:x_{ij}=1\text{ iff }(i,j)\in F\right\}
$. The standard TSP polytope

\item (i.e., ``\textit{The} TSP Polytope'') is defined as $Conv\left(
\left\{  x^{T_{k}},\text{ }\left(  k=1,\ldots,(n-1)!\right)  \right\}
\right)  .$
\end{description}
\end{definition}

\subsection{There exists an alternate ``TSP
polytope''}
\label{Alt_TSP_Polytope_SubSection}

In this section, we will present a \textit{polynomial-sized} system of linear
constraints which correctly abstracts TSP tours, and illustrate it with a
numerical example.

\begin{theorem}
\label{AP_Polytope_Thm}
The extreme points of \[AP:=\left\{  \mathbf{w}\in\mathbb{R}^{(n-1)^{2}}:\sum\limits_{s\in S}
w_{is}=1\text{ \ }\forall i\in M;\text{ \ }\sum\limits_{i\in M}w_{is}=1\text{
\ }\forall s\in S;\text{ \ }\mathbf{w}\geq\mathbf{0}\right\}\]
are in one-to-one correspondence with TSP tours which start and end at city ``$0$''.
\end{theorem}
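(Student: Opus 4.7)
The polytope $AP$ is precisely the classical assignment polytope on the bipartite index set $M \times S$, so my plan is to establish the claim by reducing it to the standard Birkhoff--von Neumann structure theorem together with the obvious encoding of a tour as a permutation of the non-starting cities. I would proceed in three steps.

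First, I would argue that every extreme point of $AP$ has entries in $\{0,1\}$. The coefficient matrix of the equality system is the node-edge incidence matrix of the complete bipartite graph $K_{m,m}$, which is totally unimodular. Since all right-hand sides are integer and $\mathbf{w} \geq \mathbf{0}$, every basic feasible solution is integer; combined with the row and column sum constraints equal to $1$, this forces each $w_{is} \in \{0,1\}$. (Alternatively, one could invoke Birkhoff--von Neumann directly, as $AP$ is the polytope of $m \times m$ doubly stochastic matrices.)

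Second, I would identify the $0/1$ extreme points with $m \times m$ permutation matrices: the row sum condition forces, for each $i \in M$, a unique $s \in S$ with $w_{is}=1$, and the column sum condition forces, for each $s \in S$, a unique $i \in M$ with $w_{is}=1$. Hence each extreme point determines and is determined by a bijection $\pi: M \to S$, or equivalently an ordering $(i_1, \ldots, i_m)$ of the cities of $M$ with $i_s := \pi^{-1}(s)$ for $s \in S$.

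Third, I would close the bijection by mapping each ordering $(i_1, \ldots, i_m)$ to the Hamiltonian tour $0 \to i_1 \to i_2 \to \cdots \to i_m \to 0$, and conversely reading off from any such tour the time at which each city in $M$ is visited. Both sets have cardinality $(n-1)!$, so this correspondence is a bijection. The only genuinely nontrivial ingredient is the integrality of extreme points in Step 1; everything else is bookkeeping. No subtour elimination issues arise because city $0$ is fixed as the sole starting and ending point, so any permutation of $M$ placed between the two occurrences of $0$ yields a single tour rather than a union of subtours.
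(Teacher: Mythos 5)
Your proposal is correct and follows essentially the same route as the paper: identify $AP$ as the assignment (Birkhoff) polytope so that its extreme points are the $m\times m$ permutation matrices, then read each permutation of $M$ as the visiting order of a tour rooted at city $0$. The only cosmetic difference is that you justify integrality via total unimodularity of the bipartite incidence matrix, whereas the paper simply cites Birkhoff's theorem; the correspondence step is identical.
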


\begin{proof}
\textit{Polytope} $AP$ is the standard Linear Assiggment Problem (or
\textit{Birkhoff}) polytope, and has therefore, integral extreme points (see
Birkhoff (1946); Burkhard \textit{et al}. (2009); among others). Moreover,
using the assumption that city $``0"$ is the starting and ending point of
travel, it is trivial to construct a unique TSP tour from a given extreme
point of $AP,$ and vice versa (i.e., it is trivial to construct a unique
extreme point of $AP$ from a given TSP tour), as shown below. \ 

Let $\mathbf{w}^{k}$ $(k\in\{1,\ldots,(n-1)!\})$ denote the $k^{th}$ extreme
point of $AP$, with corresponding set of assignments $C^{k}:=\left\{  \left(
(a_{p}^{k},p\right)  \in(M,S),\text{ }p=1,\ldots,n-1:\left(  \forall(p,q)\in
S^{2}:p\neq q,\text{ }a_{p}^{k}\neq a_{q}^{k}\right)  \right\}  .$ Then the
components of $\mathbf{w}^{k}$ are specified as follows:
\begin{equation}
\forall\left(  i,s\right)  \in\left(  M,S\right)  ,\text{ }w_{is}
^{k}=\left\{
\begin{array}
[c]{c}
1\text{ \ \ if }i=a_{s}^{k};\\
\\
0\text{ \ \ otherwise.}
\end{array}
\right.  \label{AP_Polytope_Proof(1)}
\end{equation}
The order of visits for the unique TSP tour $(T_{k})$ corresponding to
$\mathbf{w}^{k}$ is: $0\longrightarrow a_{1}^{k}\longrightarrow\ldots
\longrightarrow a_{n-1}^{k}\longrightarrow0.$

Conversely, let $T_{k}$ $(k\in\{1,\ldots,(n-1)!\})$ denote the $k^{th}$ TSP
tour, with order of visits specified as: $0\longrightarrow a_{1}
^{k}\longrightarrow\ldots\longrightarrow a_{n-1}^{k}\longrightarrow0$ (where
$a_{p}^{k}\in M$ for $p=1,\ldots,n-1$). The unique extreme point of $AP$,
$\mathbf{w}^{k}$, corresponding to $T_{k}$ is obtained by applying
(\ref{AP_Polytope_Proof(1)}) above. \ \ \medskip
\end{proof}

\begin{remark}
\label{Non-Applicability_Rmk} \ \ 

\begin{enumerate}
\item \label{NARmk(0.5)}$AP$ is distinct from the permutahedron (i.e., the
convex hull of all vectors that arise from permutations) of the TSP cities.

\item \label{NARmk(1)}$AP$ does not induce TSP tours \textit{per se} (i.e.,
Hamiltonian cycles of the TSP cities; see Lawler \textit{et al}. (1985)).

\item \label{NARmk(2)}It is not possible to make $AP$ induce TSP tours
\textit{per se} by adding a ``dummy'' city to the set of cities and using it
as the starting and ending point of the travels. The reason for this is that
it would not necessarily be possible to associate the $AP$ solution thus
obtained to a TSP tour, as the actual cities of the first and last
\textit{times-of-visit} may be different.

\item \label{NARmk(3)}The vertices of $AP$\ are linear assignment problem
(LAP; see Burkard \textit{et al}. (2009)) solutions, whereas the vertices of
``\textit{The} TSP Polytope'' model Hamiltonian cycles. Hence, $AP$ and
``\textit{The} TSP Polytope'' are mathematically-unrelated polytopes.\textit{\ }

\item \label{NArmk(3.5)}The association that can be made between $AP$ and
``\textit{The} TSP Polytope'' is \textit{cognitive} only.

\item \label{NARmk(4)}According to the \textit{Minkowski-Weyl Theorem}
(Minkowski (1910); Weyl (1935); see also Rockafellar (1997, pp.153-172)),
every polytope can be equivalently described as the intersection of
hyperplanes ($\mathcal{H}$-representation/external description) or as a convex
combination of (a finite number of) vertices ($\mathcal{V}$
-representation/internal description). ``\textit{The} TSP Polytope'' is easy
to state in terms of its $\mathcal{V}$-representation. However, no
\textit{polynomial-sized} $\mathcal{H}$-representation of it is known. On the
other hand, the $\mathcal{H}$-representation of $AP$ is well-known to be of
(low-degree) \textit{polynomial size} (see Burkard \textit{et al}. (2009)),
and it is trivial to state its $\mathcal{V}$-representation also.

\item \label{NARmk(5)}In order to model the TSP optimization problem as an
$LP$ in the space of the natural, \textit{travel-leg} $x_{ij}$ variables, an
$\mathcal{H}$-representation of ``\textit{The} TSP Polytope''\ must be
developed. This task has thwarted all efforts so far. In order to model the
TSP optimization problem as an $LP$ in the space of the LAP,
\textit{travel-time} $w_{ir}$ variables, a linear function which correctly
captures TSP tours costs must be developed. Hence, mathematical developments
focused on what is/is not possible to do for the linear system modeling of the
``\textit{The} TSP Polytope'' would have no pertinence in abstractions which
are based on $AP$ and do not require the natural $x_{ij}$ variables. Examples
of such abstractions are given in Diaby (2007), and Diaby and Karwan (2016).
\ \smallskip$\square$
\end{enumerate}
\end{remark}

The results below follow directly from the discussions above.

\begin{corollary}
\label{AP_Polytope_Coroll1}\ \ \ \ 

\begin{enumerate}
\item $AP$ is an alternate ``TSP polytope'' from the standard TSP polytope
(i.e., the so-called ``\textit{The} TSP Polytope'').

\item $AP$ is a refutation of the claim in the recent literature (Avis and
Tiwary (2015), Bri\"{e}t \textit{et al}. (2015), Braun \textit{et al}. (2015),
Fiorini \textit{et al}. (2015), Lee \textit{et al}. (2017), and Averkov
\textit{et al}. (2018)) that the combinatorial configurations which define an
\textit{NP-Complete} problem cannot be abstracted into a
\textit{polynomial-sized} system of linear constraints.\ 
\end{enumerate}
\end{corollary}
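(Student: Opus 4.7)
The plan is to appeal directly to Theorem~\ref{AP_Polytope_Thm} and Remark~\ref{Non-Applicability_Rmk} for both parts of the corollary, since the substantive mathematical content has already been established and the authors themselves flag the corollary as ``follow[ing] directly from the discussions above.''

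For part~(1), I will first recall from Theorem~\ref{AP_Polytope_Thm} that the extreme points of $AP$ are in bijective correspondence with the TSP tours starting and ending at city ``$0$''. This alone establishes $AP$ as a legitimate polyhedral abstraction of the set of TSP tours in the sense used throughout the paper, and hence qualifies it as an ``alternate TSP polytope.'' I will then invoke Remark~\ref{Non-Applicability_Rmk} (items~\ref{NARmk(0.5)}, \ref{NARmk(1)}, and~\ref{NARmk(3)}) to confirm that $AP$ is nonetheless mathematically distinct from ``\textit{The} TSP Polytope'': the two polytopes live in disjoint ambient spaces (the $w_{is}$-space versus the $x_{ij}$-space), their vertices encode distinct combinatorial objects (linear assignments versus Hamiltonian cycles), and the correspondence between them is, per item~\ref{NArmk(3.5)}, cognitive rather than projective/affine.

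For part~(2), the argument is equally short. The $\mathcal{H}$-representation of $AP$ used in Theorem~\ref{AP_Polytope_Thm} consists of $2(n-1)$ equality constraints and $(n-1)^{2}$ nonnegativity constraints in $(n-1)^{2}$ variables, each count being polynomial in~$n$ (the natural size parameter for the TSP). Combined with part~(1), this exhibits a polynomial-sized system of linear constraints whose integral extreme points are in one-to-one correspondence with the TSP tours, which directly contradicts the unconditional claim in Avis and Tiwary (2015), Bri\"{e}t \textit{et al.}~(2015), Braun \textit{et al.}~(2015), Fiorini \textit{et al.}~(2015), Lee \textit{et al.}~(2017), and Averkov \textit{et al.}~(2018) that no such system can exist for the combinatorial configurations underlying an \textit{NP-Complete} problem.

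I do not anticipate any technical obstacle, since both parts reduce to citations of results already in place. The only point requiring care in the exposition is to make explicit that the refutation is being offered under the interpretation of ``abstraction'' used in the impossibility papers themselves, i.e., a polytope whose vertices are in one-to-one correspondence with the combinatorial objects of interest; the natural objection that $AP$ is not stated in the ``native'' $x_{ij}$-space has already been pre-empted by items~\ref{NARmk(2)}, \ref{NARmk(4)}, and~\ref{NARmk(5)} of Remark~\ref{Non-Applicability_Rmk}, so no additional argument on that point is needed at this stage.
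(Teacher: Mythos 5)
Your proposal is correct and follows essentially the same route as the paper, which offers no separate proof of the corollary beyond the remark that ``the results below follow directly from the discussions above,'' i.e., from Theorem~\ref{AP_Polytope_Thm} and Remark~\ref{Non-Applicability_Rmk}. Your only addition is the explicit constraint count ($2(n-1)$ equalities and $(n-1)^{2}$ nonnegativities), which the paper leaves implicit by citing the well-known polynomial-size $\mathcal{H}$-representation of the Birkhoff polytope.
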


\begin{definition}
\label{TL&TT_TSP_Polytope} \ 

\begin{enumerate}
\item The standard TSP polytope may be referred to as the \textit{travel-legs
(TL) TSP polytope}, and its extreme points may be referred to as
\textit{travel-legs (TL) TSP tours}.

\item $AP$ may be referred to as the \textit{travel-times (TT) TSP polytope
(rooted at ``}$\mathit{0}$\textit{'')}, and its extreme points may be referred
to as \textit{travel-times (TT) TSP tours}. \smallskip
\end{enumerate}
\end{definition}

We will now provide a numerical illustration of the discussions above. \smallskip

\begin{example}
\label{Non-Applicability_Example} \ \medskip

Some of the differences between the \textit{travel-times TSP polytope} ($AP$)
and the \textit{travel-legs TSP polytope (}``\textit{The} TSP\ Polytope'')
discussed above will now be illustrated using a $6$-city TSP with node set $\{0,1,2,3,4,5\}.$

\begin{itemize}
\item Illustration of the\ \textit{travel-times TSP tours} on the $AP$
graphical tableau:
\begin{figure}[H]
\includegraphics[width=0.6\textwidth]{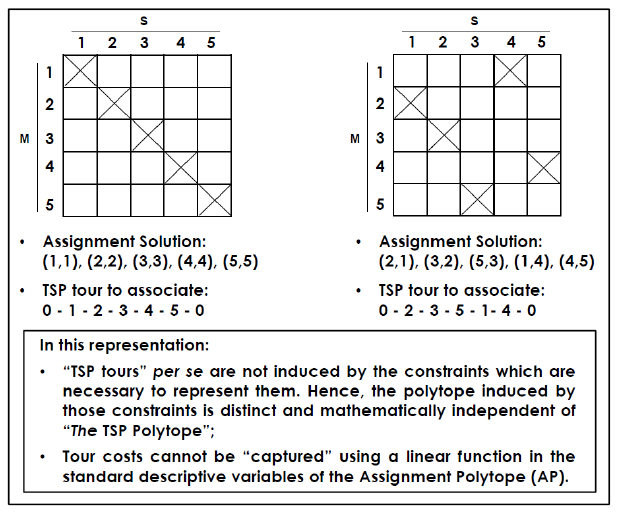}
\end{figure}
\item Illustration of the\ \textit{travel-times TSP tours} on the TSP Graph:
\begin{figure}[H]
\includegraphics[width=0.6\textwidth]{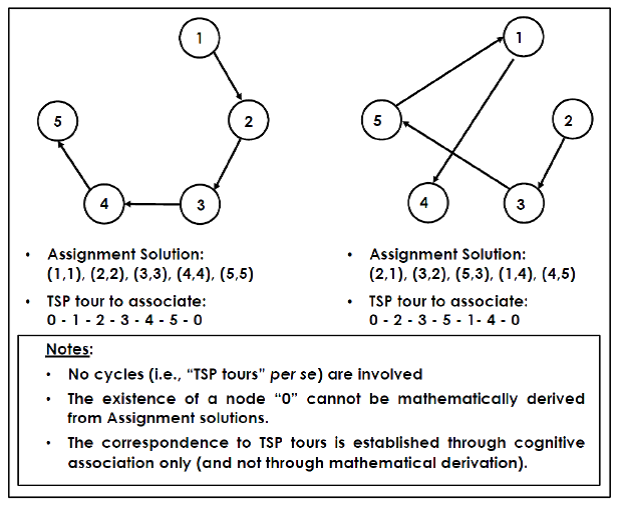}
\end{figure}
\item Illustration of the \textit{travel-legs TSP\ tours} on the TSP Graph:
\begin{figure}[H]
\includegraphics[width=0.6\textwidth]{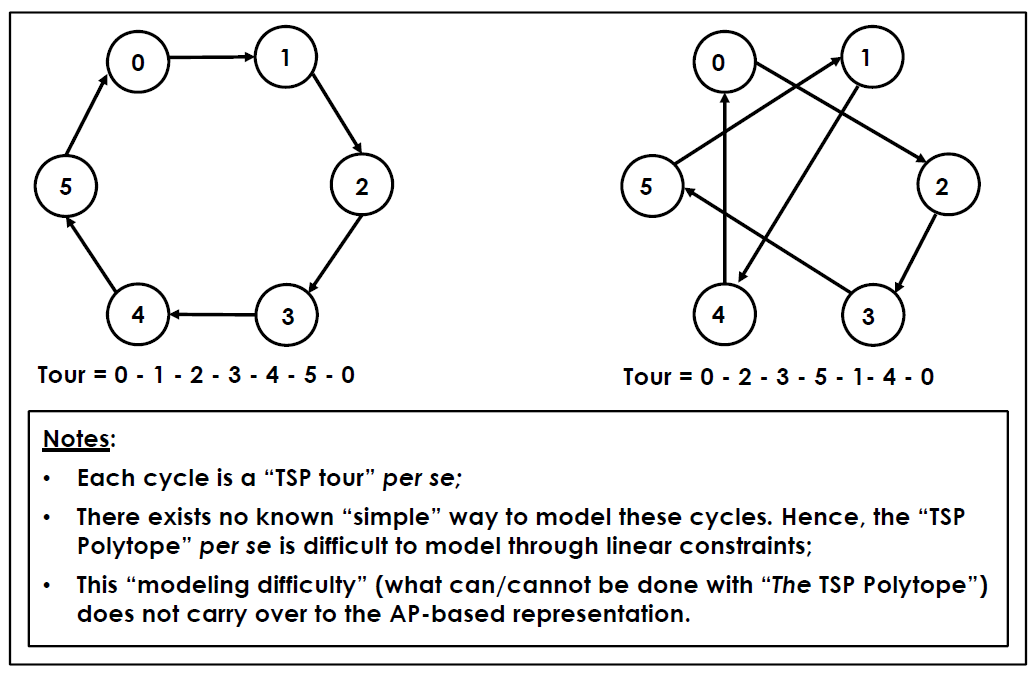}
\end{figure} 
\end{itemize}

\noindent$\square\medskip$
\end{example}

\subsection{The existence of an affine map does not necessarily imply
\textit{extension} relationships}
\label{Affine_Map_SubSection}

In the field of OR, it is common for two formulations of a problem expressed
in terms of disjoint sets of variables (say $x$ and $y$) to be equivalent. In
such a case, if an affine mapping between the sets of variables is known (say
from $y$ to $x$), one may add its expression to the model stated in the space
of the ``domain'' variables ($y$) and project the resulting \textit{augmented}
model onto the space of the ``range'' variables ($x$) for the purposes of
comparing the \textit{strengths} of the bounds that can be obtained from the
two formulations in a same space of variables (see Balas (2005, pp. 132-136).
The validity of this approach comes from the fact that the addition of
redundant variables and constraints to (i.e., an \textit{augmentation} of) an
optimization model does not change the objective function value of that model.
Note however, that a polytope cannot have a constraints description in the
space of variables which are not part of its set of descriptive variables.
Hence, as shown in Diaby and Karwan (2016; 2017), it is not valid to use the
projection of an \textit{augmentation} of a model in order to make inferences
about the \textit{size} of the constraints description of that model if the
vector space being projected to is disjoint from that of the model. Hence, as
has been mentioned earlier in this paper and also in Diaby and Karwan (2016;
2017), the belief/presumption that the existence of an affine map between
disjoint sets of variables describing different polytopes is sufficient to
imply $EF$ relationships from which valid inferences about model
\textit{sizes} can be made is a misconception. We will now demonstrate this by
deriving a sufficient condition for such an existence with no necessity nor
implication of \textit{extension} relations. We will also offer an alternate
interpretation of such an existence in an optimization context.

\begin{theorem}
\label{Affine_Mapping_Thm} \ \ \ \bigskip

\noindent Let:

\begin{description}
\item $\qquad$ $x\in\mathbb{R}^{p}$ and $y\in\mathbb{R}^{q}$ be disjoint
vectors of variables;

\item $\qquad$ $X:=\{x\in\mathbb{R}^{p}:Ax\leq a\}\neq\varnothing$
(where:$\ A\in\mathbb{R}^{k\times p};$ $a\in\mathbb{R}^{k}$);

\item $\qquad$ $L:=\left\{  \dbinom{x}{y}\in\mathbb{R}^{p+q}:Bx+Cy=b\right\}
\neq\varnothing$ (where: $B\in\mathbb{R}^{m\times p};$ $C\in\mathbb{R}
^{m\times q};$ $b\in\mathbb{R}^{m}$);

\item $\qquad$ $Y:=\{y\in\mathbb{R}^{q}:Dy\leq d\}\neq\varnothing$
(where:$\ D\in\mathbb{R}^{l\times q};$ $d\in\mathbb{R}^{l}$).
\end{description}

\noindent Then, the following are true:

\begin{enumerate}
\item \label{AMThm(1)}Provided $B^{T}B$ is nonsingular, there exists a
one-to-one affine function from $\mathbb{R}^{q}$ to $\mathbb{R}^{p}$ which
maps $y$ onto $x$.

\item \label{AMThm(2)}Assume the following conditions are true:

\begin{enumerate}
\item \label{AffCond(a)}$B^{T}B$ is nonsingular;

\item \label{AffCond(b)}The constraints of $L$ are \textit{redundant for} $X$
and $Y$ respectively;

\item \label{AffCond(c)}$L$ is the graph of a one-to-one correspondence
between the points of $X$ and the points of $Y$ (see Beachy and Blair (2006,
pp. 47-59)).
\end{enumerate}

Then, the optimization of any linear function of $x$ over $X$ can be done
without any reference to the constraints description of $X$.
\end{enumerate}
\end{theorem}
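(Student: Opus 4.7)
The plan is to prove Part~\ref{AMThm(1)} by elementary linear algebra and then to leverage it together with conditions (\ref{AffCond(b)}) and (\ref{AffCond(c)}) to reduce the optimization of any linear function of $x$ over $X$ to a linear program whose constraints description does not involve $A$ or $a$.

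For Part~\ref{AMThm(1)}, I would start from the defining equation $Bx + Cy = b$ of $L$, left-multiply both sides by $B^T$, and use the invertibility of $B^T B$ to solve explicitly for $x$, yielding the affine map
\[
f(y) \;:=\; (B^T B)^{-1} B^T b \;-\; (B^T B)^{-1} B^T C\, y.
\]
This function is single-valued on all of $\mathbb{R}^q$ and, whenever $(x,y) \in L$, must agree with $x$ (since the hypothesis $B^T B$ nonsingular forces $x$ to be the unique solution of $Bx = b - Cy$).

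For Part~\ref{AMThm(2)}, the plan is to show that the restriction of $f$ to $Y$ is precisely the bijection $Y \to X$ guaranteed by condition (\ref{AffCond(c)}), so that any linear objective $c^T x$ over $X$ pulls back via $f$ to a linear objective over $Y$. Concretely, for an arbitrary $\bar{x} \in X$, condition (\ref{AffCond(b)}) produces a $\bar{y}$ with $(\bar{x},\bar{y}) \in L$, condition (\ref{AffCond(c)}) forces this $\bar{y}$ to be unique and to lie in $Y$, and Part~\ref{AMThm(1)} then identifies $\bar{x} = f(\bar{y})$. Reversing the argument shows that $f$ sends each $y \in Y$ to its unique partner in $X$. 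Therefore $\{c^T x : x \in X\} = \{c^T f(y) : y \in Y\}$, and for any $c \in \mathbb{R}^p$, $\min\{c^T x : x \in X\}$ equals
\[
c^T (B^T B)^{-1} B^T b \;+\; \min\left\{ -\, c^T (B^T B)^{-1} B^T C\, y \;:\; Dy \leq d \right\},
\]
which is a linear program whose feasible region is described solely by $Dy \leq d$, with no reference to $Ax \leq a$.

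The step that will require the most care, I expect, is extracting exactly the right content from condition (\ref{AffCond(b)}). The intended reading must be that every $x \in X$ admits some $y$ with $(x,y) \in L$, and symmetrically every $y \in Y$ admits some $x$ with $(x,y) \in L$; otherwise $f$ restricted to $Y$ could map outside $X$ (breaking feasibility of the equivalent LP) or miss points of $X$ (breaking equality of optimal values). Once (\ref{AffCond(b)}) and (\ref{AffCond(c)}) are combined to pin down $f|_Y$ as exactly the bijection of (\ref{AffCond(c)}), the remaining reduction is a one-line substitution into the objective.
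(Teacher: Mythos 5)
Your proposal is correct and follows essentially the same route as the paper: Part 1 via left-multiplication by $B^{T}$ and inversion of $B^{T}B$ to obtain $x=\overline{C}y+\overline{b}$, and Part 2 by substituting this map into the objective and optimizing over $Y$ alone (the paper's ``Problem $LP_{2}$''), then recovering $x^{\ast}$ through the map. Your extra care in pinning down what condition (b) must supply (that every point of $X$ has a partner in $L$ and vice versa) makes explicit a step the paper leaves implicit, but it does not change the argument.
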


\begin{proof}
\ \ \smallskip

\begin{enumerate}
\item From the definition of $L$, we have:
\begin{equation}
Bx=b-Cy. \label{GenRefutProof(1)}
\end{equation}
Pre-multiplying ($\ref{GenRefutProof(1)}$) by $B^{T}$ gives:
\begin{equation}
B^{T}Bx=B^{T}b-B^{T}Cy. \label{genRefutProof(3)}
\end{equation}
Using the nonsingularity of $B^{T}B$ (according to the premise) and
pre-multiplying ($\ref{genRefutProof(3)}$) by $(B^{T}B)^{-1},$ we get:
\begin{equation}
x=(B^{T}B)^{-1}B^{T}b-(B^{T}B)^{-1}B^{T}Cy. \label{GenRefutProof(5)}
\end{equation}
Expression ($\ref{GenRefutProof(5)}$) can be written as:
\begin{align}
&  x=\overline{C}y+\overline{b},\text{ }\nonumber\\[0.09in]
&  \text{where: }\overline{C}:=-(B^{T}B)^{-1}B^{T}C\text{, and }\overline
{b}:=(B^{T}B)^{-1}B^{T}b. \label{GenRefutProof(7)}
\end{align}

\item Consider the task of minimizing the function $\alpha^{T}x$ ($\alpha
\in\mathbb{R}^{p}$) over $X$. This optimization problem can be expressed as:\smallskip

\textit{Problem LP}$_{0}$:\medskip\newline
\begin{tabular}
[c]{l}
\ \ \
\end{tabular}
$\left|
\begin{tabular}
[c]{ll}
$\text{Minimize:}$ & $\alpha^{T}x$\\
& \\
$\text{Subject To:}$ & $x\in X.$
\end{tabular}
\ \ \ \ \ \ \ \ \text{ \ }\right.  \bigskip$\newline Since the constraints of
$L$ are \textit{redundant} for $X$ and $Y$ respectively (according to premise
($b$)), \textit{LP}$_{0}$ is equivalent to:$\medskip$\newline \textit{Problem
LP}$_{1}$:\medskip\newline
\begin{tabular}
[c]{l}
\ \ \
\end{tabular}
$\left|
\begin{tabular}
[c]{ll}
$\text{Minimize:}$ & $\alpha^{T}x$\\
& \\
$\text{Subject To:}$ & $\dbinom{x}{y}\in L;$ $\ x\in X;$ \ $y\in Y.$
\end{tabular}
\ \ \ \ \ \ \ \ \text{ \ }\right.  \bigskip$\newline Using
(\ref{GenRefutProof(7)}) (since $B^{T}B$ is nonsingular according to premise
($a$)) to eliminate $x$ from the objective of \textit{Problem LP}$_{1};$ using
the fact that the constraints of $L$ are \textit{redundant }for $X$ and $Y$
respectively (according to premise ($b$)) to eliminate the constraints of $L$
from \textit{Problem LP}$_{1};$ and using the fact that $L$ is the graph of a
one-to-one correspondence between the points of $X$ and the points of $Y$
(according to premise ($c$)) to eliminate $X$ from the constraints set of
\textit{Problem LP}$_{1},$ gives that \textit{Problem LP}$_{1}$ can be solved
using the following two-step procedure:\medskip\newline \textit{Step 1}: Solve
\textit{Problem LP}$_{2}$: \medskip\ \newline
\begin{tabular}
[c]{l}
\ \ \ \ \ \ \
\end{tabular}
\ \ \ $\left|
\begin{tabular}
[c]{ll}
$\text{Minimize:}$ & $\left(  \alpha^{T}\overline{C}\right)  y+\alpha
^{T}\overline{b}$\\
& \\
$\text{Subject To:}$ & $y\in Y$
\end{tabular}
\ \ \ \ \ \ \ \ \text{ \ }\right.  \medskip\medskip$\newline
\begin{tabular}
[c]{l}
\ \ \ \
\end{tabular}
\ \ \ Let the solution be $y^{\ast}.$ \medskip

\textit{Step 2}: Use Graph $L$ to ``retrieve'' the optimal solution $x^{\ast}$
to \textit{Problem LP}$_{1}$:\bigskip\ \newline
\begin{tabular}
[c]{l}
\ \ \
\end{tabular}
\ \ \ \ \ $x^{\ast}=\overline{C}y^{\ast}+\overline{b}.$\ \medskip

Note that the second term of the objective function of \textit{Problem
LP}$_{2}$ can be ignored in the optimization process of \textit{Problem
LP}$_{2},$ since that term is a constant.\medskip

Also, if $L$ is derived from knowledge of the vertex (``$\mathcal{V}$-'')
representation of $X$ only (as would be the case if $X$ were ``\textit{The}
TSP Polytope'' for example), then this would mean that the $\mathcal{H}
$-representation of $X$ is not involved in the ``two-step'' solution process
above, but rather, that only the $\mathcal{V}$-representation of $X$ is involved.\ \ 
\end{enumerate}
\end{proof}

\bigskip Part ($\ref{AMThm(2)}$) of Theorem $\ref{Affine_Mapping_Thm}$ is
similar to Proposition 2 of Padberg and Sung (1991; p. 323).

\section{Specific refutations}
\label{Specific_Refutations_Section}

In all of the recent $EF$ developments with the claim that an
\textit{NP-Complete} COP cannot be abstracted into a \textit{polynomial-sized}
LP (i.e., Avis and Tiwary (2015), Bri\"{e}t \textit{et al}. (2015), Braun
\textit{et al}. (2015), Fiorini \textit{et al}. (2015), Lee \textit{et al}.
(2017), and Averkov \textit{et al}. (2018)), results are developed for the
natural polytopes of COPs using ``slack matrices''-based concepts (``extension
complexity;'' ``approximation complexity;'' ``relaxation complexity;'' etc.),
and their generalizations to all arbitrary abstractions of COPs hinges
(invariably) on the use of the notion that the existence of linear maps (which
are special-cases of affine maps) between disjoint sets of variables implies
$EF$ relationships between polytopes stated in the spaces of those sets of
variables. As discussed in Diaby and Karwan (2016; 2017), this notion is
\textit{degenerate} in the sense that it allows for every conceivable pair of
polytopes to be $EF$s of each other provided they are non-empty and are stated
in terms of disjoint sets of variables. Hence, the key to pinpointing the
inherent mathematical flaws in all of these recent $EF$ papers is to focus on
their generalization steps (which invariably need this \textit{degenerate}
$EF$ notion). We will illustrate this in this section using the Fiorini
\textit{et al}. (2015) developments, after providing a brief overview of the
background definitions which are used interchangeably in the ``line of research.''

\subsection{Background Definitions}

\begin{definition}[``Standard EF Definition'']
\label{EF_Dfn_Std}
An \textit{extended formulation} for a polytope $P=\left\{  x\in\mathbb{R}^{d}\right.  $ $\left.  :Ax\leq
b\right\}  \subseteq\mathbb{R}^{d}$ is a polyhedron $Q=\left\{  \dbinom{x}
{y}\in\mathbb{R}^{d+k}:Ex+Fy\leq g\right\}  ,$ the projection of which onto
$x$-space, $\varphi_{x}(Q):=\left\{  x\in\mathbb{R}^{d}:\left(  \exists
y\in\mathbb{R}^{k}:\dbinom{x}{y}\in Q\right)  \right\}  ,$ is equal to $P$
(where $A\in\mathbb{R}^{n\times d},$ $b\in\mathbb{R}^{d},\ E$ $\in
\mathbb{R}^{m\times d},$ $F\in\mathbb{R}^{m\times k},$ and $g\in\mathbb{R}
^{m}$) (Yannakakis (1991)).
\end{definition}

\begin{definition}[``Fiorini \textit{et al.} Definition \#1'']
\label{EF_Dfn_A1}
A polyhedron $Q$ $=$ $\left\{  \dbinom{x}{y}\in\mathbb{R}^{d+k}:Ex+Fy\right.  $ $\left.  \leq
g\right\}  $ is an \textit{extended formulation} of a polytope $P\subseteq
\mathbb{R}^{d}$ if there exists a linear map $\pi$ $:$ $\mathbb{R}
^{d+k}\longrightarrow\mathbb{R}^{d}$ such that $P$ is the image of $Q$ under
$\pi$ (i.e., $P=\pi(Q)$; where $E\in\mathbb{R}^{m\times d}$, $F\in
\mathbb{R}^{m\times k},$ and $g\in\mathbb{R}^{m}$) (Fiorini \textit{et al.}
(2015; p. 17:3, lines 20-21; p. 17:9, lines 22-23)).
\end{definition}

\begin{definition}[``Fiorini \textit{et al.} Definition \#2'']
\label{EF_Dfn_A2}
An \textit{extended formulation} of a polytope $P\subseteq\mathbb{R}^{d}$ is a
linear system $Q$ $=$ $\left\{  \dbinom{x}{y}\in\mathbb{R}^{d+k}:Ex+Fy\leq
g\right\}  $ such that $x\in P$ if and only if there exists $y\in
\mathbb{R}^{k}$ such that $\dbinom{x}{y}\in Q.$ (In other words, $Q$ is an
$EF$ of $P$ if $\left(  x\in P\Longleftrightarrow\left(  \exists
y\in\mathbb{R}^{k}:\dbinom{x}{y}\in Q\right)  \right)  $ (where $E$
$\in\mathbb{R}^{m\times d},$ $F\in\mathbb{R}^{m\times k},$ and $g\in
\mathbb{R}^{m}$) (Fiorini \textit{et al}. (2015; p. 17:2, last paragraph; p.
17:9, line 20-21)).$\medskip$
\end{definition}

\begin{remark}
\label{EF_Dfns_Rmk} \ \ 

\begin{enumerate}
\item \label{EFDefRmk(1)}
Because every equality constraint in an optimization
problem can be changed to a pair of inequalities, the expression of $Q$ in the
definitions above is general. However, the equality constraints are sometimes
seperated out in the $EF$ papers, and $E$, $F$, and $g$ are partitioned by
rows as $E^{=},$ $E^{\leq},$ $F^{=},$ $F^{\leq},$ $g^{=},$ and $g^{\leq},$
respectively, so that $Q$ is written as:
\begin{equation}
Q=\left\{  \dbinom{x}{y}\in\mathbb{R}^{d+k}:E^{\leq}x+F^{\leq}y\leq g^{\leq
};\text{ }E^{=}x+F^{=}y=g^{=}\right\}  \label{EF_Q_Dfn_Rmk1}
\end{equation}

(where $E^{\leq}\in\mathbb{R}^{m^{\leq}\times d},$ $E^{=}\in\mathbb{R}
^{m^{=}\times d},$ $F^{\leq}\in\mathbb{R}^{m^{\leq}\times k},$ $F^{=}
\in\mathbb{R}^{m^{=}\times k},$ $g\in\mathbb{R}^{m^{\leq}},$ and
$g\in\mathbb{R}^{m^{=}},$ with $m^{=}+m^{\leq}=m$). \ 

\item \label{EFDefRmk(2)}Definition \ref{EF_Dfn_Std} is the standard,
reference $EF$ definition.

\item \label{EFDefRmk(3)}Definitions \ref{EF_Dfn_A1} and \ref{EF_Dfn_A2} are
alternate $EF$ definitions which are used in Fiorini \textit{et al}. (2015)
and other recent $EF$ papers.

\item \label{EFDefRmk(4)}Definition \ref{EF_Dfn_A2} is consistent with
Definition \ref{EF_Dfn_Std}.

\item \label{EFDefRmk(5)}Definition \ref{EF_Dfn_A1} is inconsistent with
Definition \ref{EF_Dfn_Std} (and therefore, with Definition \ref{EF_Dfn_A2}
also), when the description of $Q$ does not involve the $x$-variables (i.e.,
when $E=\mathbf{0}$; see Diaby and Karwan (2016; 2017)).

\item \label{EFDefRmk(6)}The use of Definition \ref{EF_Dfn_A1} in the recent
$EF$ papers is what allows them to generalize their results beyond the natural
polytopes of COPs. However, this use of Definition \ref{EF_Dfn_A1} is also
what makes the Mathematics in those $EF$ papers inherently flawed. This will
be illustrated in the remainder of this section. \ \ 
\end{enumerate}

\noindent$\square\medskip$
\end{remark}

\subsection{Inherently-flawed Mathematics: Illustration using Fiorini
\textit{et al.} (2015)}

\begin{example}
\label{No_EF-Relation_Example}
: Let $\mathbf{x}\in\mathbb{R}^{3}$ and
$\mathbf{y\in}\mathbb{R}$ be disjoint vectors of variables. Let $P$ be a
polytope in the space of $\mathbf{x}$, and $Q,$ a polytope in the space of
$\dbinom{\mathbf{x}}{\mathbf{y}}$, with:
\begin{align}
P  &  :=Conv\left(  \left\{  \left(
\begin{array}
[c]{c}
8\\
10\\
6
\end{array}
\right)  ,\left(
\begin{array}
[c]{c}
12\\
15\\
9
\end{array}
\right)  \right\}  \right)  \text{, and}\label{EF_NumEx(a)}\\
Q  &  :=\left\{  \dbinom{\mathbf{x}}{\mathbf{y}}\in\mathbb{R}^{3+1}
:2\leq\mathbf{0}\cdot\mathbf{x}+\mathbf{y}\leq3\right\}  . \label{EF_NumEx(b)}
\end{align}

We will now discuss some key results of Fiorini \textit{et al}. (2015) which
are refuted by $P$ and $Q.$

\begin{enumerate}
\item \textit{Refutation of the validity of Definition \ref{EF_Dfn_A1}.}

\begin{enumerate}
\item Note that the following is true for $P$ and $Q$:
\begin{equation}
\mathbf{x}\in P\nLeftrightarrow\left(  \exists\mathbf{y\in}\mathbb{R}
:\dbinom{\mathbf{x}}{\mathbf{y}}\in Q\right)  . \label{EF_NumEx(d)}
\end{equation}
For example,
\begin{equation}
\left(  \exists\mathbf{y\in}\mathbb{R}:\left(
\begin{array}
[c]{c}
22.5\\
-50\\
100\\
y
\end{array}
\right)  \in Q\right)  \nRightarrow\left(  \left(
\begin{array}
[c]{c}
22.5\\
-50\\
100
\end{array}
\right)  \in P\right)  . \label{EF_NumEx(e)}
\end{equation}
\ \ Hence, $Q$ \textbf{is not} an \textit{extended formulation} of $P$
according to Definition \ref{EF_Dfn_A2}.

\item Observe that the following is also true for $P$ and $Q$:
\begin{align}
P\text{ }=  &  \text{ }\left\{  \mathbf{x\in}\mathbb{R}^{3}:\mathbf{x}
=\pi\cdot\dbinom{\mathbf{z}}{\mathbf{y}}\mathbf{,}\text{ }\mathbf{z}
\in\mathbb{R}^{3},\text{ }\dbinom{\mathbf{z}}{\mathbf{y}}\in Q\right\}
\label{EF_NumEx(f1)}\\[0.09in]
\text{ }=  &  \text{ \ }\pi(Q);
\label{EF_NumEx(f2)}\\[0.09in]
\text{wh}\text{e}\text{r}  &  \text{e }\pi\text{ }\mathbf{=}\text{ }\left[
\begin{array}
[c]{cccc}
0 & 0 & 0 & 4\\
0 & 0 & 0 & 5\\
0 & 0 & 0 & 3
\end{array}
\right]  \text{.} 
\label{EF_NumEx(f3)}
\end{align}
In other words, $P$ is the image of $Q$ under the linear map $\pi$. Hence, $Q$
\textbf{is} an\textit{ extended formulation} of $P$ according to Definition
\ref{EF_Dfn_A1}.

\item It follows from (a) and (b) above, that Definitions \ref{EF_Dfn_A1} and
\ref{EF_Dfn_A2} are in contradiction of each other with respect to $P$ and $Q$.

Hence $P$ and $Q$ are a refutation of the validity of Definition
\ref{EF_Dfn_A1}, since Definition \ref{EF_Dfn_A2} is equivalent to Definition
\ref{EF_Dfn_Std}, which is the standard/reference definition (as indicated in
Remarks \ref{EF_Dfns_Rmk}.\ref{EFDefRmk(2)}\ and \ref{EF_Dfns_Rmk}
.\ref{EFDefRmk(4)}).
\end{enumerate}

\item \textit{Refutation of ``Theorem 3''} (p.17:10) of Fiorini \textit{et
al}. (2015).

The proof of the theorem (``Theorem 3'') hinges on Definition \ref{EF_Dfn_A1}.
The specific statement in Fiorini \textit{et al}. (2015; p. 17:10, lines
26-28) is:
\begin{align}
&  \text{``}...\text{\textit{Because} }\nonumber\\[0.06in]
&  \mathit{Ax\leq b\Longleftrightarrow\exists y:E}^{=}\mathit{x+F}
^{=}\mathit{y=g}^{=}\mathit{,}\text{ }\mathit{E}^{\leq}\mathit{x+F}^{\leq
}\mathit{y\leq g}^{=}\mathit{,}\label{EF_NumEx(g)}\\
&  \text{\textit{each inequality in} }\mathit{Ax\leq b}\text{\textit{is valid
for all points of} }\mathit{Q}\text{. ...''}\nonumber
\end{align}

The equivalent of (\ref{EF_NumEx(g)}) in terms of $P$ and $Q$ (using the
partitioned form (\ref{EF_Q_Dfn_Rmk1}) for $Q$) is:
\begin{equation}
\mathbf{x}\in P\Longleftrightarrow\exists\mathbf{y\in}\mathbb{R}
:\dbinom{\mathbf{x}}{\mathbf{y}}\in Q. \label{EF_NumEx(h)}
\end{equation}
Clearly, (\ref{EF_NumEx(h)}) is \textbf{not true}, since it is in
contradiction of expressions (\ref{EF_NumEx(d)})-(\ref{EF_NumEx(e)}) above.
Hence, the proof of ``Theorem 3'' (and therefore, ``Theorem 3'') of Fiorini
\textit{et al}. (2015) is refuted by $P$ and $Q$.

\item \textit{Refutation of ``Lemma 9''} (p. 17:13-17:14) of Fiorini
\textit{et al}. (2015)\textit{.}

The first part of the lemma is stated (in Fiorini \textit{et al}. (2015))
thus:
\begin{align*}
&  \text{``\textit{Lemma 9. Let }P\textit{, }Q\textit{, and }F\textit{\ be
polytopes. Then, the following hold:}}\\
&  \text{\textit{(i) if F is an extension of P, then }}xc\text{(F)}\geq
xc\text{(P); \ldots''\ }
\end{align*}

The proof of this (in Fiorini \textit{et al}. (2015)) is stated as follows:
\[
\text{``\textit{Proof. The first part is obvious because every extension of
F\ is in particular an extension of P. }\ldots''}
\]

The notation ``$xc(\cdot)"$ in these statements (of Fiorini \textit{et al}.
(2015)) stands for ``\textit{extension complexity} of ($\cdot$),'' which is
defined as (p. 17:9, lines 24-25 of Fiorini \textit{et al}. (2015)):
\begin{align*}
&  \text{``...\textit{the extension complexity of P is the minimum size (i.e.,
the number of inequalities) of an EF }}\\
&  \text{\textit{of P}.''}
\end{align*}

The refutation of the Fiorini \textit{et al}. (2015) \textit{``Lemma 9''} for
$P$ (as shown in (\ref{EF_NumEx(a)}) above) and $Q$ (as shown in
(\ref{EF_NumEx(b)}) above) is as follows.

First, note that (\ref{EF_NumEx(b)}) can re-written in its explicit form as:
\begin{equation}
Q=\left\{  (\mathbf{x,y)\in(}\mathbb{R}^{3},\mathbb{R)}:2\leq y\leq3\right\}
.\label{Counter_Example_Description_1}
\end{equation}

As shown in Part ($1$) above (in this paper), $Q$ is an \textit{extension} of
$P$ according to Definition \ref{EF_Dfn_A1} (which is central in Fiorini
\textit{et al}. (2015)). Accordingly, therefore, this means that $Q$ is an
\textit{extended formulation} of every one of the infinitely-many possible
$\mathcal{H}$-descriptions of $P$. This would be true in particular for the
$\mathcal{H}$-description below for $P$:
\begin{equation}
P=\left\{
\begin{array}
[c]{l}
\mathbf{x\in}\mathbb{R}^{3}:\\
\\
-5x_{1}+4x_{2}\leq0;\\
\text{ }\\
3x_{2}-5x_{3}=0;\text{ }\\
\\
3x_{1}-4x_{3}\leq0;\\
\\
8\leq x_{1}\leq12;\text{ }\\
\\
10\leq x_{2}\leq15;\\
\text{ }\\
6\leq x_{3}\leq9
\end{array}
\right\}  .\label{Counter_Example_Description_2}
\end{equation}
Clearly, the number of inequalities in (\ref{Counter_Example_Description_2})
is greater than the number of inequalites in
(\ref{Counter_Example_Description_1}).

In other words, for $P$ and $Q$, we have that:
\begin{equation}
xc(Q)\ngeq xc(P). \label{EF_NumEx(i)}
\end{equation}
Hence, $P$ and $Q$ are a refutation of ``Lemma 9'' of Fiorini \textit{et al}.
(2015), being that $Q$ is the \textit{extension}, and\textit{\ }$P$, the
\textit{projection,} according to the definitions used in Fiorini \textit{et
al}. (2015).
\end{enumerate}

\noindent$\square\medskip$
\end{example}

\begin{remark}
\ \ 

\begin{enumerate}
\item According to Fiorini \textit{et al}. (2015; p. 17:7, Section 1.4, first
sentence; p. 17:11, lines 6-11; p. 17:14, lines 5-6; p.17:16, lines 13-14
after the ``Fig. 4'' caption), their ``Theorem 3'' and ``Lemma 9'' play
pivotal, foundational roles in the rest of their developments. Note that
``Lemma 9'' (of Fiorini \textit{et al.} (2015)) does not depend on any one of
the \textit{extended formulations} definitions used in Fiorini \textit{et al.}
(2015) in particular. Hence, we believe the numerical illustration we have
provided above represents a simple-yet-complete refutation of the developments
in Fiorini \textit{et al.} (2015).

\item A ``feature'' of $Q$ in our counter-example above is that its minimal
inequality (``outer'' or ``$\mathcal{H}$-'') description does not require the
$x$-variables. Hence, $P$ and $Q$ in the example essentially have disjoint
sets of descriptive variables. Hence, as shown in Diaby and Karwan (2016;
2017), the $EF$s relationship which would be created between the two polytopes
by the addition of the expression of the linear map in ($\ref{EF_NumEx(f1)}
$)-($\ref{EF_NumEx(f3)}$) to the description of $Q$ is \textit{degenerate}
/meaningless with respect to the task of making valid inferences about the
\textit{size} of the $\mathcal{H}$-description of $Q$ from the \textit{size}
of the $\mathcal{H}$-description of $P$. The reason for this is that the
derivation of the linear map involves the extreme-point (``inner-'' or
``$\mathcal{V}$-'') description of $P$ only, as detailed in Diaby and Karwan
(2016; 2017). \ \ 
\end{enumerate}

\noindent$\square$
\end{remark}

\section{Conclusions}
\label{Conclusion_Section}

In this paper, we have provided a multi-leveled refutation of the claim in
some of the recent \textit{extended formulations} ($EF$s) papers (Avis and
Tiwary (2015), Bri\"{e}t \textit{et al}. (2015), Braun \textit{et al}. (2015),
Fiorini \textit{et al}. (2015), Lee \textit{et al}. (2017), and Averkov
\textit{et al}. (2018)) that an \textit{NP-Complete} problem cannot be
abstracted into a \textit{polynomial-sized} linear program (LP). One of the
two fundamental misconceptions in those papers is the (sometimes-implicit)
belief that all abstractions of a combinatorial optimization problem (COP)
\textit{must} involve the polytope stated in terms of the natural variables
for that COP (for example, ``\textit{Th}e TSP Polytope'' for the TSP). We have
provided a direct refutation of this misconception by exhibiting a
\textit{polynomial-size} LP model which we have shown to correctly abstract
TSP tours. The other misconception in the ``unconditional impossibility''
papers is the belief that the existence of an affine map between disjoint sets
of variables implies $EF$ relationships from which valid inferences about
model \textit{sizes} can be made. In order to refute this misconception on a
general level, we have developed conditions for such existence independently
of any implication or need for an \textit{extension} relationship. We have
also clarified the meaning of the existence of such a mapping in an
optimization context.

A consequence of the presumption that the existence of an affine map between
disjoint sets of variables implies meaningful/non-degenrate $EF$ relationships
between polytopes in the spaces of those sets of variables is that it
introduces inherent flaws in the Mathematics of the papers that use it in
their efforts to make inferences about model \textit{sizes}. Focusing on this,
we have provided counter-example refutations of the key foundational results
of Fiorini \textit{et al}. (2015) (namely, their ``Theorem 3'' and ``Lemma 9''
) upon which (according to them) their claims that:\smallskip

\begin{quotation}
``\textit{We solve this question by proving a super-polynomial bound on the
number of }\newline \textit{inequalities in every LP for the TSP.}'' (Fiorini
\textit{et al}. (2015, p.17:2, lines 9-10)); \medskip

``\textit{We} \textit{also} \textit{prove such unconditional super-polynomial
bounds for the maximum cut }\newline \textit{and} \textit{the}
\textit{maximum} \textit{stable} \textit{set} \textit{problems.''} (Fiorini
\textit{et al}. (2015, p.17:2, lines 10-12));
\end{quotation}

and

\begin{quotation}
``...\textit{it is} \textit{impossible to prove} \textit{P = NP by means}
\textit{of a polynomial-sized LP that }\newline \textit{e\textit{x}presses any
of these problems.}'' (Fiorini \textit{et al}. (2015, p.17:2, lines 12-13))\smallskip
\end{quotation}

\noindent are based. In other words, we have shown that these claims of
Fiorini \textit{et al}. (2015) are not supported by the Mathematics in their
paper. The approach we have used in order to do this can be readily applied to
all of the other papers in this ``line of research,'' including Bri\"{e}t
\textit{et al}. (2015), Braun \textit{et al}. (2015), Lee \textit{et al}.
(2017), and Averkov \textit{et al}. (2018)).

As far as we know, very few papers have been published, which offer alternate
abstractions of \textit{NP-Complete }COPs which do not require their natural
variables. Some exceptions are Diaby (2007; 2010a; 2010b; 2010c), Maknickas
(2015), and Diaby and Karwan (2016). Our suggestion for future directions is
that research focus be shifted away from developments aimed at showing
negative results for the natural polytopes of COPs in favor of efforts
directed at the exploration of novel approaches which may yield
low-dimensional alternate abstractions instead.\pagebreak

\end{document}